\def\nb0{{\mathbf{0}}}
\def\nb1{{\mathbf{1}}}
\newtheorem{lemma}{Lemma}
\newtheorem{cor}{Corollary}
\newtheorem{remark}{Remark}
\def\sir{\mathtt{SIR}}
\title{HetNets and Massive MIMO: Modeling, Potential Gains, and Performance Analysis}
\author{\authorblockN{Marios Kountouris, Nikolaos Pappas}\\
\authorblockA{Department of Telecommunications, SUPELEC (Ecole Sup{\'e}rieure d'Electricit{\'e}),\\ 3 rue Joliot-Curie, 91192 Gif-sur-Yvette, France\\
Email: \{marios.kountouris,nikolaos.pappas\}@supelec.fr
}}
\begin{document}
\maketitle

\begin{abstract}
We consider a heterogeneous cellular network (HetNet) where a macrocell tier with a large antenna array base station (BS) is overlaid with a dense tier of small cells (SCs).
We investigate the potential benefits of incorporating a massive MIMO BS in a TDD-based HetNet and we provide analytical expressions for the coverage probability and the area spectral efficiency using stochastic geometry. The duplexing mode in which SCs should operate during uplink macrocell transmissions is optimized. Furthermore, we consider a reverse TDD scheme, in which the massive MIMO BS can estimate the SC interference covariance matrix. Our results suggest that significant throughput improvement can be achieved by exploiting interference nulling and implicit coordination across the tiers due to flexible and asymmetric TDD operation.
\end{abstract}

\section{INTRODUCTION}
\label{sec:intro}

Massive multiple-input multiple output (MIMO) systems and heterogeneous cellular networks (HetNets) are currently considered as two promising and effective solutions for achieving high data rates and coverage gains and satisfying the galloping demand for ubiquitous broadband mobile services. In HetNets, spatial reuse gains are exploited by an extreme network densification through the deployment of small cells (SCs) \cite{mybook}. Small cell base stations (SBSs) are usually low-power nodes with one or very few antennas and higher throughput and enhanced coverage are achieved by significantly reducing the distance between the small cell users (SUEs) and SBSs. Massive MIMO or large-scale MIMO is a multiuser transmission strategy in which the number of antennas employed at each cell site is increased. Exploiting pilot reuse and channel reciprocity in time division duplex (TDD) networks, a large number of users is served using the additional spatial degrees of freedom \cite{MarzettaTCOM10}. Although they appear to be two opposing paradigms, significant performance gains are expected if these network deployment strategies are carefully designed to coexist. Prior work on massive MIMO showed the high throughput gains under different precoding and receiver strategies \cite{MarzettaTCOM10, JakobRZF, Rusek}. The performance of massive MIMO in random networks with Poisson distributed BS locations has been recently analyzed in \cite{BaiHeaMassMIMO, PrasannaMassMIMO}, however extra care is required when asymptotic results in the number of antennas are combined with stochastic geometry and infinite cellular networks. A two-tier network with a massive MIMO system overlaid with a dense tier of SCs is studied in \cite{JakobICC13}.

In this paper, we consider a TDD-based network architecture where the massive MIMO macrocell tier is overlaid with small cells. We investigate the potential benefits by operating both tiers in a cooperative and synergetic way. In particular, the performance in terms of coverage probability and area spectral efficiency is analytically characterized for both tiers using stochastic geometry.
When the macrocell is in uplink mode, we find the optimal number of SCs that can operate in uplink and downlink modes. In the macrocell downlink mode, channel reciprocity and interference covariance matrix estimation are leveraged to spatially cancel the interference caused to a fraction of randomly selected small cells. Our results reveal an interesting interplay between the number of macro BS antennas, the number of mobile user equiments (MUEs) served and the degrees freedom used for interference cancellation.

\section{SYSTEM MODEL}
\label{sec:sysmodel}
We consider a heterogeneous cellular network where a macro tier is overlaid with small cells. The macro BS (MBS) is equipped with $N$ antennas and serves $K$ single-antenna MUEs. Each SBS employs a single antenna and communicates with a given pre-scheduled single-antenna SUE. The MUEs are assumed to be uniformly distributed in a circle of radius $R_m$ centered at the MBS and independent of the small cell tier placements. MBS and MUEs transmit at constant powers $P_{\rm m}$ and $P_{\rm mu}$, respectively.
We model the locations of SC nodes by a spatial homogeneous Poisson point process (PPP) $\Phi$ of density $\lambda$. A standard pathloss model $\ell(x) = \left\|x\right\|^{-\alpha}$, $\alpha > 2$ is assumed. Channel fading between any pair of antenna is assumed to be Rayleigh with unit mean and independent across nodes. The distance between a typical SBS-SUE pair is constant and is denoted by $d$.
The transmit powers of SCs and SUEs are denoted as $P_{\rm s}$ and $P_{\rm su}$, respectively. The network is interference-limited, i.e. the background noise is not taken into account in the analysis.
Both tiers share the same bandwidth, perfect synchronization is assumed, and transmissions take place over flat fading channels. We use a block fading model in which the channel remains constant during one block and fades independently over each block. TDD-based communication is used here and perfect channel reciprocity is assumed, i.e. the uplink and downlink channels for a given pair are identical. Each MUE sends a pre-assigned orthogonal pilot sequence of length $K$ to the MBS during the training phase. The MBS is assumed to estimate perfectly the reverse link channel to each of its MUEs.
Data transmissions occur in time slots and in each time slot, each SC is configured to be in downlink mode with probability $q$, thus in uplink mode with probability $1-q$. For exposition convenience, $q$ is assumed to be independent across time slots and small cells and identical for all small cells.

\section{UPLINK ANALYSIS}
\label{sec:uplink}
In the uplink data transmission phase, each MUE sends data to the MBS and a maximum ratio combining receiver is employed using the normalized channel estimate of the corresponding MUE. In the small cell tier, SBSs are distributed according to a homogeneous PPP $\Phi_{\rm SDL}$ of density $\lambda q$ whereas the SUE locations form a PPP $\Phi_{\rm SUL}$ of density $\lambda (1-q)$.

\subsection{Macrocell Tier Performance}
We consider first the performance of the macrocell tier. The SIR of MUE $k$ at the MBS is given by
\begin{eqnarray}
\sir_k^{\rm MBS} = \frac{P_{\rm mu}\left\|\mathbf{h}_k\right\|^2 r_k^{-\alpha}}{I_{\rm SUL} + I_{\rm SDL} + I_{\rm MUL}}
\end{eqnarray}
where $\mathbf{h}_k \in \mathbb{C}^{N\times 1}$ is the channel from the $k$-th MUE to the MBS. The interference from uplink SC transmissions, downlink SC transmissions, and uplink data transmission in the macrocell is given, respectively, by
\begin{eqnarray}
I_{\rm SUL} & = & \sum_{i \in \Phi_{\rm SUL}} P_{\rm su}g_i R_i^{-\alpha}, \\
I_{\rm SDL} & = & \sum_{j \in \Phi_{\rm SDL}} P_{\rm s} z_j R_j^{-\alpha}, \\
I_{\rm MUL} & = & \sum_{m=1, m\neq k}^{K} P_{\rm mu}\left|\mathbf{h}_m^H\mathbf{v}_k\right|^2 r_m^{-\alpha},
\end{eqnarray}
where $g_i$ is the channel gain between $i$-th SUE and the MBS, $z_j$ is the channel gain between $j$-th SBS and the MBS, and $\mathbf{v}_k = \mathbf{h}_k/\left\|\mathbf{h}_k\right\|$. The distance between the MBS and the $m$-th MUE is denoted as $r_m$, and $R_i$ ($R_j$) is the distance from $i$-th SUE ($j$-th SBS) to the MBS.

\begin{lemma}
\label{lem:DL_MUE}
The uplink coverage probability of MUE, $\mathbb{P}(\sir_k^{\rm MBS}>T)$, is given by
\begin{eqnarray}
p_c^{\rm UL,MUE} = \int_{0}^{R_m} \sum_{n=0}^{N-1}\frac{(-s)^n}{n!}\frac{d^n}{ds^n}\mathcal{L}_{I}(s)f_r(r)\rm d r,
\end{eqnarray}
with $\mathcal{L}_{I}(s) = \mathcal{L}_{I_{\rm SUL}}(s)\mathcal{L}_{I_{\rm SDL}}(s)\mathcal{L}_{I_{\rm MUL}}(s)$, $s = Tr^{\alpha}/P_{\rm mu}$ and $f_r(r) = \frac{2r}{R_m^2}\mathbb{I}\{0\leq r \leq R_m\}$ where $\mathbb{I}(\cdot)$ is the indicator function. The Laplace transform of the interference terms can be expressed as
\begin{eqnarray}
\mathcal{L}_{I_{\rm SDL}}(s) & = & e^{-\lambda q (P_{\rm s}s)^{\frac{2}{\alpha}}C_{\alpha}}, \\
\mathcal{L}_{I_{\rm SUL}}(s) & = & e^{-\lambda (1-q) (P_{\rm su}s)^{\frac{2}{\alpha}}C_{\alpha}}, \\
\mathcal{L}_{I_{\rm MUL}}(s) & = & \prod_{m=1, m\neq k}^{K}(1+Tr_m^{-\alpha}r^{\alpha})^{-1}.
\end{eqnarray}
where $C_{\alpha} = \frac{2\pi^2}{\alpha}\csc\left(\frac{2\pi}{\alpha}\right)$.
\end{lemma}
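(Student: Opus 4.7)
The plan is to condition on the serving-link distance $r = r_k$ and on the aggregate interference $I = I_{\rm SUL} + I_{\rm SDL} + I_{\rm MUL}$, exploit the fact that after MRC the signal gain $\|\mathbf{h}_k\|^2$ is $\mathrm{Gamma}(N,1)$ distributed (its entries being i.i.d.\ $\mathcal{CN}(0,1)$), and then integrate out $I$ and $r$ in turn. With $s = T r^\alpha / P_{\rm mu}$, the standard Erlang tail identity gives
\begin{eqnarray*}
\mathbb{P}\bigl(\|\mathbf{h}_k\|^2 > sI \,\big|\, r, I\bigr) = e^{-sI}\sum_{n=0}^{N-1}\frac{(sI)^n}{n!}.
\end{eqnarray*}
Taking expectation in $I$ term by term and using $\mathbb{E}[I^n e^{-sI}] = (-1)^n \mathcal{L}_I^{(n)}(s)$, which is valid because $\mathcal{L}_I$ is analytic on $(0,\infty)$, produces the inner summation. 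The radial density $f_r(r) = 2r/R_m^2$ follows from the assumed uniform placement of MUEs on the disk of radius $R_m$, and integrating the conditional probability against $f_r$ delivers the outer integral.

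\textbf{Factorisation and PPP contributions.} Since the three interferer populations are mutually independent -- $\Phi_{\rm SDL}$ and $\Phi_{\rm SUL}$ are obtained by complementary independent thinnings of $\Phi$ with probabilities $q$ and $1-q$, hence independent by the colouring theorem, while the intra-cell fades are independent of the small-cell fades -- one has $\mathcal{L}_I(s) = \mathcal{L}_{I_{\rm SUL}}(s)\mathcal{L}_{I_{\rm SDL}}(s)\mathcal{L}_{I_{\rm MUL}}(s)$. For the two shot-noise terms I would apply the probability generating functional of a homogeneous PPP together with Rayleigh fading: for a PPP of density $\mu$ and transmit power $P$,
\begin{eqnarray*}
\mathbb{E}[e^{-sI}] = \exp\!\Bigl(-2\pi\mu \int_0^\infty \bigl(1 - (1 + sPr^{-\alpha})^{-1}\bigr)\, r\, dr\Bigr).
\end{eqnarray*}
The substitution $u = sPr^{-\alpha}$ reduces this to the beta-type integral $\int_0^\infty u^{-2/\alpha}/(1+u)\, du = \pi / \sin(2\pi/\alpha)$ and produces the exponent $-\mu (sP)^{2/\alpha}\,(2\pi^2/\alpha)\csc(2\pi/\alpha) = -\mu (sP)^{2/\alpha} C_\alpha$, which for $\mu = \lambda q$ and $\mu = \lambda(1-q)$ matches $\mathcal{L}_{I_{\rm SDL}}(s)$ and $\mathcal{L}_{I_{\rm SUL}}(s)$ respectively.

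\textbf{Intra-cell term and the main obstacle.} For $\mathcal{L}_{I_{\rm MUL}}$ the decisive observation is that the MRC weight $\mathbf{v}_k = \mathbf{h}_k/\|\mathbf{h}_k\|$ is measurable with respect to $\mathbf{h}_k$ alone; for $m \neq k$, by the rotational invariance of the isotropic complex Gaussian, the projection $\mathbf{h}_m^H \mathbf{v}_k$ is $\mathcal{CN}(0,1)$ conditionally on $\mathbf{v}_k$ and hence unconditionally, so $|\mathbf{h}_m^H \mathbf{v}_k|^2 \sim \exp(1)$ and is independent across $m$. Each interferer therefore contributes a Laplace factor $(1+sP_{\rm mu} r_m^{-\alpha})^{-1}$, and inserting $s = Tr^\alpha / P_{\rm mu}$ recovers the claimed product. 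The main care is needed precisely here, in justifying that the projected scalar $\mathbf{h}_m^H \mathbf{v}_k$ really does behave as a unit-variance complex Gaussian independent of $\mathbf{v}_k$; the remaining pieces -- the gamma-tail identity, the derivative-of-Laplace trick, the interchange of differentiation and expectation, and the PPP shot-noise computation -- are then standard.
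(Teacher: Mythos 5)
Your proposal is correct and follows essentially the same route as the paper, which likewise conditions on $r$, uses the Gamma$(N,1)$ distribution of $\|\mathbf{h}_k\|^2$ and the exponential distribution of $g_i$, $z_j$, and $|\mathbf{h}_m^H\mathbf{v}_k|^2$, and invokes the standard Laplace-transform/PGFL machinery of Hunter--Andrews--Weber. You have simply filled in the details (Erlang tail, $\mathbb{E}[I^n e^{-sI}]=(-1)^n\mathcal{L}_I^{(n)}(s)$, the beta-type integral giving $C_\alpha$, and the rotational-invariance argument for the MRC projections) that the paper leaves as a sketch.
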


\begin{proof}
The coverage probability is $p_c^{\rm UL,MUE} = \mathbb{E}_r\left[\mathbb{P}(\sir_k^{\rm MBS}>T | r)\right]$ and is derived easily using for instance tools from \cite{Andy} noting that $g_i$, $z_j$, and $\left|\mathbf{h}_m^H\mathbf{v}_k\right|^2$ are exponential distributed r.v. and $\left\|\mathbf{h}_k\right\|^2$ is gamma distributed, i.e. Gamma(N,1).
\end{proof}

Using the inequality $x^he^{-\frac{hx}{e}} \leq 1$ for $h \geq 0$, we have that
\begin{cor}
\label{cor:DL_MUE 2}
The uplink coverage probability of MUE is upper bounded by
\begin{eqnarray}
p_c^{\rm UL,MUE} \leq \int_{0}^{R_m} \sum_{n=0}^{N-1}\frac{(s)^n}{n!}\mathcal{L}_{I}(s-n/e)f_r(r)\rm d r.
\end{eqnarray}
\end{cor}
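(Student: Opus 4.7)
The plan is to start from the exact coverage expression of Lemma~\ref{lem:DL_MUE} and bound, pointwise in $r$ and $n$, each summand $\frac{(-s)^n}{n!}\frac{d^n}{ds^n}\mathcal{L}_I(s)$ by $\frac{s^n}{n!}\mathcal{L}_I(s-n/e)$. Since the outer integral against $f_r(r)$ is a positive measure, monotonicity will then lift the pointwise inequality to the integrated statement.

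To do this, I would first re-express the derivative of the Laplace transform as a moment. Writing $\mathcal{L}_I(s)=\mathbb{E}[e^{-sI}]$ and differentiating $n$ times under the expectation (justified because $I\ge 0$ and $e^{-sI}I^n$ is dominated by an integrable function for $s>0$), one obtains
\begin{equation*}
\frac{d^n}{ds^n}\mathcal{L}_I(s)=(-1)^n\,\mathbb{E}\!\left[I^n e^{-sI}\right],
\end{equation*}
so that $\frac{(-s)^n}{n!}\frac{d^n}{ds^n}\mathcal{L}_I(s)=\frac{s^n}{n!}\mathbb{E}[I^n e^{-sI}]$. The next step is to apply the stated inequality $x^h e^{-hx/e}\le 1$ with $h=n$ and $x=I$ (not $x=sI$), which gives the pointwise bound $I^n\le e^{nI/e}$ on the sample path. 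Plugging this in,
\begin{equation*}
\mathbb{E}\!\left[I^n e^{-sI}\right]\le \mathbb{E}\!\left[e^{nI/e}e^{-sI}\right]=\mathbb{E}\!\left[e^{-(s-n/e)I}\right]=\mathcal{L}_I(s-n/e),
\end{equation*}
which is exactly the per-summand bound we need. Multiplying by $s^n/n!$, summing over $n=0,\dots,N-1$, and integrating against $f_r(r)$ then yields the corollary.

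The main technical subtlety is not the algebra but the domain of validity of $\mathcal{L}_I(s-n/e)$: for $n\ge 1$ the argument $s-n/e$ may become negative, in which case $\mathcal{L}_I(s-n/e)=\mathbb{E}[e^{(n/e-s)I}]$ is the MGF of an unbounded shot-noise interference and may be infinite, making the bound vacuous in the small-$s$ regime. This is harmless as an upper bound, but it is the reason the choice $x=I$ (rather than $x=sI$, which would have produced $\mathcal{L}_I(s(1-n/e))$) is the right one: it isolates the offset $n/e$ from $s$ so that for sufficiently large $s$ (i.e.\ large $T$ or large $r$, which is the regime of interest) the bound is finite and tractable. Beyond this point, no extra ingredients are needed; the derivation is a one-line application of the hinted inequality to the exact Lemma~\ref{lem:DL_MUE}.
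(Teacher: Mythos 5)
Your proposal is correct and follows exactly the route the paper intends: the paper's entire ``proof'' is the prefatory remark about $x^h e^{-hx/e}\le 1$, and your derivation (write the $n$-th derivative of $\mathcal{L}_I$ as $(-1)^n\mathbb{E}[I^ne^{-sI}]$, apply the inequality with $x=I$, $h=n$ to get $I^n\le e^{nI/e}$, and absorb the factor into the Laplace transform argument) is the unique sensible way to instantiate that hint so as to produce $\mathcal{L}_I(s-n/e)$ rather than $\mathcal{L}_I(s(1-n/e))$. Your observation that the bound may be infinite (hence vacuous) when $s<n/e$ is a valid caveat about the corollary itself, not a gap in your argument.
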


The macrocell area spectral efficiency (ASE) is computed as $\mathcal{T}_{\rm m,UL} = K p_c^{\rm UL,MUE} \log_2(1+T)$.

\begin{remark}
When the channel estimate is not reliable and becomes uncorrelated with the actual channel or when a random unitary $\mathbf{v}_k$ is used for receiving the uplink data, the macrocell ASE is given by \begin{eqnarray}
\mathcal{T}_{\rm m,UL} = K\int_{0}^{R_m} \mathcal{L}_{I}(s)f_r(r)\rm d r.
\end{eqnarray}
In that case, it can be easily shown that $\mathcal{T}_{\rm m,UL}$ is a decreasing function in $q$ for $P_{\rm s} > P_{\rm su}$, i.e. it is optimal to set $q=0$ (all SCs in uplink mode).
\end{remark}


\subsection{Small Cell Tier Performance}
We analyze now the performance in the small cell tier when the macro tier is in uplink. Since the channel gain from a typical SUE to its SBS follows an exponential distribution, similarly to Lemma~\ref{lem:DL_MUE}, we can show that
\begin{lemma}
\label{lem:DL_MUE2}
The coverage probability of a typical SUE is given by
\begin{eqnarray}
\label{eq:UL_SC}
p_c^{\rm UL,SUE} & = & e^{-\lambda q (P_{\rm s}s)^{\frac{2}{\alpha}}C_{\alpha}}e^{-\lambda (1-q) (P_{\rm su}s)^{\frac{2}{\alpha}}C_{\alpha}} \nonumber \\
& \times & \prod_{k=1}^{K}(1+sP_{\rm mu} D_k^{-\alpha})^{-1},
\end{eqnarray}
with $s = T d^{\alpha}/P_{\rm s}$ and $D_k$ denoting the distance between the typical SUE and $k$-th MUE.
\end{lemma}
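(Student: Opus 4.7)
The plan is to reduce the coverage probability to a product of Laplace transforms, in direct analogy with Lemma~\ref{lem:DL_MUE}. Since the fading of the desired SBS$\to$SUE link is exponential with unit mean, I would write the received signal as $P_{\rm s} h\, d^{-\alpha}$ with $h \sim \exp(1)$ and the aggregate interference as $I = I_{\rm SUL} + I_{\rm SDL} + I_{\rm MUL}$. Conditioning on $I$ and using $\mathbb{P}(h>x)=e^{-x}$ yields
\[
p_c^{\rm UL,SUE} = \mathbb{P}(\sir>T) = \mathbb{E}\bigl[e^{-sI}\bigr],
\quad s = Td^{\alpha}/P_{\rm s}.
\]

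Next I would decompose this Laplace transform. By the independent thinning of the small cell PPP into $\Phi_{\rm SDL}$ and $\Phi_{\rm SUL}$, and the independence of the MUE locations from $\Phi$, the three interference terms are mutually independent, so $\mathbb{E}[e^{-sI}]$ factors into $\mathcal{L}_{I_{\rm SDL}}(s)\mathcal{L}_{I_{\rm SUL}}(s)\mathcal{L}_{I_{\rm MUL}}(s)$. For the PPP contribution $I_{\rm SDL}$, applying the probability generating functional of the PPP together with the exponential cross-fading gives
\[
\mathcal{L}_{I_{\rm SDL}}(s) = \exp\!\left(-2\pi\lambda q \int_{0}^{\infty}\frac{r\,\mathrm{d}r}{1+r^{\alpha}/(P_{\rm s}s)}\right),
\]
and an analogous expression holds for $\mathcal{L}_{I_{\rm SUL}}(s)$ with $\lambda(1-q)$ and $P_{\rm su}$. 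The radial integral is standard: the substitution $u = r^{\alpha}/(P_{\rm s}s)$ together with the reflection identity $\Gamma(z)\Gamma(1-z) = \pi/\sin(\pi z)$ reduces it to $(P_{\rm s}s)^{2/\alpha}C_{\alpha}/(2\pi)$, which produces the two exponential factors appearing in the statement. For the MUE term, conditional on the $K$ distances $D_k$ and using that the fading gains $g_k$ to the typical SUE are independent unit-mean exponentials,
\[
\mathcal{L}_{I_{\rm MUL}}(s) = \prod_{k=1}^{K}\mathbb{E}\bigl[e^{-sP_{\rm mu}g_k D_k^{-\alpha}}\bigr] = \prod_{k=1}^{K}\frac{1}{1+sP_{\rm mu}D_k^{-\alpha}}.
\]

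None of these steps is technically difficult; the most delicate point is the ``typical SUE'' reduction. I would invoke Slivnyak's theorem for the transmitting SBS serving the typical SUE: removing this point from $\Phi$ leaves a PPP of density $\lambda$, which is still split by the independent per-slot Bernoulli$(q)$ thinning into $\Phi_{\rm SDL}$ and $\Phi_{\rm SUL}$ of densities $\lambda q$ and $\lambda(1-q)$. Once this is in place, multiplying the three Laplace transforms delivers (\ref{eq:UL_SC}).
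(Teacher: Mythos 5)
Your proposal is correct and follows essentially the same route as the paper, which simply notes that the desired SBS--SUE gain is exponential so that, in analogy with Lemma~\ref{lem:DL_MUE}, the coverage probability collapses to the single Laplace-transform term $\mathbb{E}[e^{-sI}]$, factored over the independently thinned PPPs and the $K$ MUEs. Your explicit PGFL evaluation yielding $C_{\alpha}$ and the Slivnyak step are exactly the details the paper leaves implicit.
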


The uplink coverage probability at the small tier access point $p_c^{\rm UL,SBS}$ is given similarly to (\ref{eq:UL_SC}) with $s = T d^{\alpha}/P_{\rm su}$ and $D_k$ denoting the distance between the typical SC and $k$-th MUE.

The aggregate area spectral efficiency of the small cell tier is given by
\begin{eqnarray*}
\mathcal{T}_{\rm s,UL} = \lambda\left[q p_c^{\rm UL,SUE} + (1-q) p_c^{\rm UL,SBS}\right]\log_2(1+T).
\end{eqnarray*}
The optimal $q$ can be found by taking the derivative of $\mathcal{T}_{\rm s,UL}$ with respect to $q$, however the analytical expression does not provide any useful insights. In Fig.~\ref{fig:Fig1}, we plot the aggregate area spectral efficiency in the small cell tier as a function of the SC density for $K=10$ users, $P_{\rm s} = 100$ mW, $P_{\rm su} = P_{\rm mu} = 5$ mW, $\alpha = 4$, $d = 10$ m, and $T = 5$ dB. First, we observe that the network throughput decreases for $\lambda$ increasing as SCs suffer from increased intra-tier interference. Second, as expected, it is beneficial to deploy more SCs in downlink mode, i.e. $q$ can be higher, in sparse networks (small $\lambda$), whereas small cell uplink mode is preferable in dense networks.

\begin{figure}[t]
\centerline{\includegraphics[width=7cm]{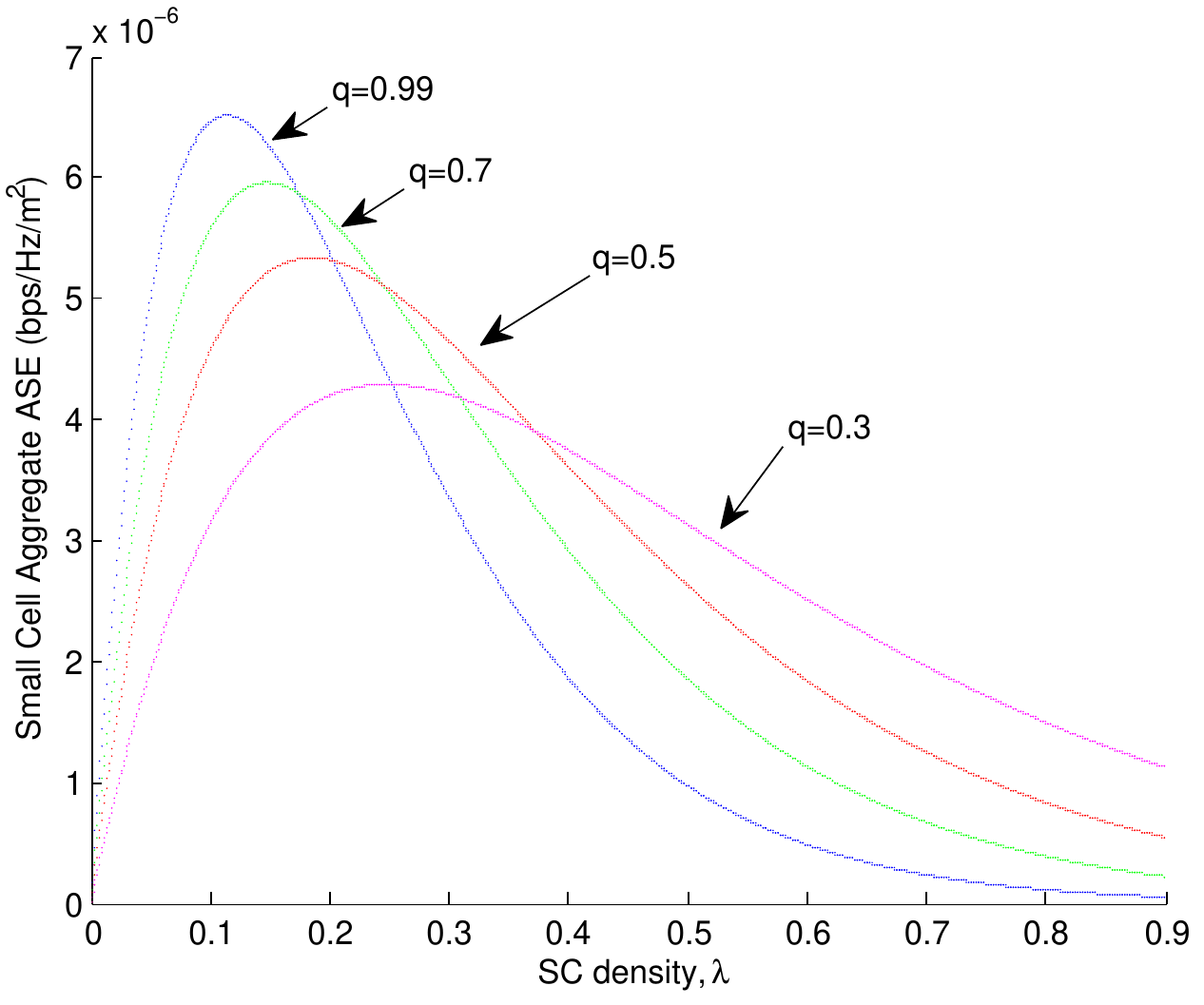}}
\caption{\small Small Cell Cumulative ASE vs. SC density $\lambda$ for different values of $q$.\label{fig:Fig1}}
\end{figure}

\section{DOWNLINK ANALYSIS}
\label{sec:downlink}
In the downlink data transmission phase, the MBS sends data to $K$ MUEs using linear zero-forcing beamforming (LZFB) with equal power allocation across MUEs. We focus on a reverse TDD scheme, i.e. all SCs are in uplink (downlink) when the macrocell operates in downlink (uplink), thus, after the macrocell uplink phase, the MBS can estimate the covariance matrix of the interference from SCs transmissions. Note that it is easier and more reliable to estimate the SC-MBS channels, which are quasi-static for stationary SCs, and that the interference covariance matrix does not vary much over long time scales. Interference covariance knowledge can be exploited and $M \leq N-K$ spatial degrees of freedom can be used to spatially cancel the interference caused by the MBS towards the SCs on the downlink. For exposition convenience, interference cancellation is performed here for $K$ randomly selected SCs. Additional gains are expected using interference nulling to the closest or strongest SCs.

\subsection{Macrocell Tier Performance}
The received SIR at the $k$-th MUE is
\begin{eqnarray}
\sir_k^{\rm MUE} = \frac{\frac{P_{\rm m}}{K}\left|\mathbf{h}_k^H\mathbf{w}_k\right|^2 r_k^{-\alpha}}{I_{\rm SUL}}
\end{eqnarray}
where $\mathbf{w}_k \in \mathbb{C}^{N \times 1}$ is the normalized zero-forcing vectors (columns of the Moore-Penrose pseudoinverse of $\mathbf{H} = \left[\mathbf{h}_1, \ldots, \mathbf{h}_K\right]$).Let $M=\left\lceil \beta(N-K)\right\rceil, 0\leq \beta \leq 1$. The channel gain follows a gamma distribution, i.e. $\left|\mathbf{h}_k^H\mathbf{w}_k\right|^2 \sim \rm Gamma(\theta+1,1)$ with $\theta = \left\lceil (N-K)(1-\beta)\right\rceil$. Using the same procedure as in Lemma~\ref{lem:DL_MUE}, we have that
\begin{lemma}
\label{lem:DL_MUE3}
The coverage probability of $k$-th MUE is given by
\begin{eqnarray}
p_c^{\rm DL,MUE} = \int_{0}^{R_m} \sum_{n=0}^{\theta}\frac{(-s)^n}{n!}\frac{{\rm d}^n}{{\rm d}s^n}\mathcal{L}_{I_{\rm SUL}}(s)f_r(r)\rm d r
\end{eqnarray}
with $s = Tr^{\alpha}K/P_m$ and $\mathcal{L}_{I_{\rm SUL}}(s) = e^{-\lambda s^{2/\alpha}C_{\alpha}}$.
\end{lemma}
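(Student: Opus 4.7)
The plan is to follow the template already laid down in the proof of Lemma~\ref{lem:DL_MUE}, adapted to the downlink LZFB setting with additional null-steering constraints. First, I would condition on the distance $r = r_k$ of the tagged MUE from the MBS, noting that $r$ is distributed with density $f_r$ given in Lemma~\ref{lem:DL_MUE}. Since the MBS uses LZFB with $K$ intra-cell zero-forcing constraints plus $M$ interference-nulling constraints on selected SBSs, the effective signal channel $\mathbf{h}_k^H\mathbf{w}_k$ is the projection of $\mathbf{h}_k$ onto an $(N-K-M+1)$-dimensional subspace. Since $\mathbf{h}_k$ is complex Gaussian and this subspace is independent of $\mathbf{h}_k$'s magnitude, $|\mathbf{h}_k^H\mathbf{w}_k|^2$ is $\text{Gamma}(\theta+1,1)$ with $\theta+1 = N-K-M+1$, consistent with the statement $M=\lceil\beta(N-K)\rceil$ and $\theta=\lceil(N-K)(1-\beta)\rceil$.

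The second step is to turn the conditional coverage probability into a derivative-of-Laplace-transform expression. With $h\sim\text{Gamma}(\theta+1,1)$, the CCDF is
\begin{equation}
\mathbb{P}(h>x) = e^{-x}\sum_{n=0}^{\theta}\frac{x^n}{n!}.
\end{equation}
Substituting $x = sI_{\rm SUL}$ with $s = TKr^{\alpha}/P_m$ and taking expectation over the interference,
\begin{equation}
\mathbb{P}(\sir_k^{\rm MUE}>T\mid r) = \sum_{n=0}^{\theta}\frac{s^n}{n!}\mathbb{E}\!\left[I_{\rm SUL}^{n}e^{-sI_{\rm SUL}}\right] = \sum_{n=0}^{\theta}\frac{(-s)^n}{n!}\frac{{\rm d}^n}{{\rm d}s^n}\mathcal{L}_{I_{\rm SUL}}(s),
\end{equation}
where the last equality uses the standard identity $\mathbb{E}[I^n e^{-sI}] = (-1)^n\mathcal{L}_I^{(n)}(s)$. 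De-conditioning on $r$ by integrating against $f_r(r)$ yields the displayed integral.

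The third step is the Laplace transform of $I_{\rm SUL}$. Under the reverse-TDD convention, when the macrocell is in downlink every SC is in uplink, so the interferers form a PPP of density $\lambda$ (the full SC density, not $\lambda q$ or $\lambda(1-q)$), each with exponential Rayleigh fading and transmit power absorbed into $s$. Applying the PGFL of the PPP together with the exponential-fading MGF produces
\begin{equation}
\mathcal{L}_{I_{\rm SUL}}(s) = \exp\!\left(-\lambda\int_{\mathbb{R}^2}\frac{s\|x\|^{-\alpha}}{1+s\|x\|^{-\alpha}}\,{\rm d}x\right) = e^{-\lambda s^{2/\alpha}C_\alpha},
\end{equation}
with $C_\alpha = (2\pi^2/\alpha)\csc(2\pi/\alpha)$ arising from the radial integral, exactly as in Lemma~\ref{lem:DL_MUE}.

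The only non-routine point, and the one I would flag as the main obstacle, is justifying the reduction of the Gamma shape parameter from $N-K+1$ (pure LZFB) to $\theta+1$ after imposing $M$ additional nulling constraints; once that distributional fact is in hand, the remainder is a mechanical repetition of the derivation of Lemma~\ref{lem:DL_MUE}, with the single interference term $I_{\rm SUL}$ replacing the triple product $\mathcal{L}_{I_{\rm SUL}}\mathcal{L}_{I_{\rm SDL}}\mathcal{L}_{I_{\rm MUL}}$ appearing there.
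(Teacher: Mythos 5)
Your proposal is correct and follows exactly the route the paper intends: the paper's own ``proof'' is just the remark that the gain $\left|\mathbf{h}_k^H\mathbf{w}_k\right|^2$ is ${\rm Gamma}(\theta+1,1)$ followed by ``the same procedure as in Lemma~\ref{lem:DL_MUE}'', which is precisely your Gamma-CCDF plus derivative-of-Laplace-transform argument with the single interference term $I_{\rm SUL}$. The only quibble is cosmetic: your identification $\theta+1=N-K-M+1$ matches the paper's $\theta=\lceil(N-K)(1-\beta)\rceil$ only up to the rounding in the definitions of $M$ and $\theta$ (when $\beta(N-K)$ is not an integer the two ceilings differ by one), but that is an imprecision in the paper's notation rather than a gap in your argument.
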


\begin{cor}
\label{cor:DL_MUE}
The coverage probability of MUE $k$ for $M = N-K$ ($\beta = 1$) is given by
\begin{eqnarray*}
p_c^{\rm DL,MUE} = \frac{1-e^{-\Delta R_m^2}}{\Delta R_m^2}, \text{with} \ \Delta = \lambda C_{\alpha} \left(\frac{K P_{\rm su}\cdot T}{P_m}\right)^{\frac{2}{\alpha}}.
\end{eqnarray*}
\end{cor}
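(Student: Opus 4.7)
The plan is to specialize Lemma~\ref{lem:DL_MUE3} to the case $\beta=1$, where the allocation $M = N-K$ consumes all the degrees of freedom for interference nulling and none are left for boosting the effective channel gain. Under this choice, the parameter $\theta = \lceil (N-K)(1-\beta)\rceil$ collapses to $0$, so the finite sum over $n$ in the coverage expression reduces to its single $n=0$ term. That term is simply $\mathcal{L}_{I_{\rm SUL}}(s)$ itself, with no derivatives to compute; this is the crucial simplification that makes a closed form available.

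Next I would substitute $s = T r^{\alpha} K / P_{\rm m}$ into $\mathcal{L}_{I_{\rm SUL}}(s) = e^{-\lambda (P_{\rm su} s)^{2/\alpha} C_\alpha}$ (the Laplace transform of the SUE uplink interference seen at the MBS, as derived in Lemma~\ref{lem:DL_MUE}, evaluated under the reverse-TDD assumption so that all small cells transmit in the uplink, i.e. the relevant density is $\lambda$). After this substitution the exponent becomes linear in $r^2$, namely $-\Delta r^2$ with $\Delta = \lambda C_\alpha \bigl(K P_{\rm su} T / P_{\rm m}\bigr)^{2/\alpha}$, which is precisely the constant stated in the corollary.

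The remaining task is to average over the MUE distance $r$ against the density $f_r(r) = 2r/R_m^2$ on $[0,R_m]$. Applying the change of variables $u = r^2$ converts the integral into $\frac{1}{R_m^2}\int_0^{R_m^2} e^{-\Delta u}\,du$, which integrates immediately to $(1 - e^{-\Delta R_m^2})/(\Delta R_m^2)$, completing the proof.

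There is no real obstacle here; the argument is purely a bookkeeping exercise once one notices that $\beta=1$ forces $\theta=0$ and kills the derivative sum. The only subtlety worth flagging is checking that the interference Laplace transform used in Lemma~\ref{lem:DL_MUE3} is consistent with the reverse-TDD setup (all SUEs active at density $\lambda$ with transmit power $P_{\rm su}$), so that the factor $P_{\rm su}$ properly appears inside $\Delta$.
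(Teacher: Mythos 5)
Your proposal is correct and follows exactly the route the paper intends: $\beta=1$ forces $\theta=0$, leaving only the $n=0$ term $\mathcal{L}_{I_{\rm SUL}}(s)$, and the change of variables $u=r^2$ gives the stated closed form. Your side remark about the $P_{\rm su}$ factor is also well taken — the Laplace transform as printed in Lemma~\ref{lem:DL_MUE3} omits $P_{\rm su}$, but the corollary's $\Delta$ confirms it should read $e^{-\lambda (P_{\rm su}s)^{2/\alpha}C_{\alpha}}$, consistent with your derivation.
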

Expectedly, $p_c^{\rm DL,MUE}$ is a decreasing function with $K$, while the macrocell ASE increases for $K$ increasing.

\subsection{Small Cell Tier Performance}
The SIR at a randomly selected SC depends on whether the interference from the macrocell to this SC is cancelled or not. Let $\mathcal{A}$ be the event that a randomly selected SC is cancelled and $\bar{\mathcal{A}}$ be its complement. Since the average number of SCs in the cell is $\lambda \pi R_m^2$, assuming uniformly distributed SCs in the cell, we have $\mathbb{P}(\mathcal{A}) = \frac{M}{\lambda \pi R_m^2}$.

\begin{lemma}
The coverage probability of a randomly selected SBS at distance $D$ from the MBS is
\begin{eqnarray}
p_c^{\rm DL,SBS} & = & \mathbb{P}(\sir > T | \mathcal{A})\mathbb{P}(\mathcal{A}) \nonumber \\
& + & \mathbb{P}(\sir > T | \bar{\mathcal{A}})(1-\mathbb{P}(\mathcal{A})),
\end{eqnarray}
where $\mathbb{P}(\sir > T | \mathcal{A}) = e^{-\lambda T^{\frac{2}{\alpha}}d^2 C_{\alpha}}$ and
\begin{eqnarray*}
\mathbb{P}(\sir > T | \bar{\mathcal{A}}) = e^{-\lambda T^{\frac{2}{\alpha}}d^2 C_{\alpha}}\left(1+\frac{P_{\rm m}Td^{\alpha}D^{-\alpha}}{K P_{\rm su}}\right)^{-K}.
\end{eqnarray*}
\end{lemma}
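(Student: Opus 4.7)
The plan is to condition on whether the typical SBS has been chosen for MBS interference nulling ($\mathcal{A}$) or not ($\bar{\mathcal{A}}$) and invoke the law of total probability, which immediately produces the outer decomposition stated in the lemma. It then remains to compute $\mathbb{P}(\sir>T\mid\mathcal{A})$ and $\mathbb{P}(\sir>T\mid\bar{\mathcal{A}})$ as Laplace transforms of the aggregate interference evaluated at the argument associated with the Rayleigh-faded desired link.

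First I would write out the received SIR at the tagged SBS. Because the reverse TDD scheme places every SC in uplink when the MBS is in downlink, the desired signal is $P_{\rm su}|h_0|^2 d^{-\alpha}$ with $|h_0|^2\sim\exp(1)$. The interference is the independent sum of (i) interference from all other SUEs, which form a PPP of density $\lambda$ with each interferer contributing $P_{\rm su}g_iR_i^{-\alpha}$ under Rayleigh fading, and (ii) MBS downlink leakage, equal to $0$ on $\mathcal{A}$ and $\frac{P_{\rm m}}{K}\|\mathbf{W}^H\mathbf{h}\|^2 D^{-\alpha}$ on $\bar{\mathcal{A}}$, where $\mathbf{h}$ is the MBS-to-SBS channel and $\mathbf{W}=[\mathbf{w}_1,\ldots,\mathbf{w}_K]$ collects the LZFB vectors. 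Exponential fading of the signal link gives $\mathbb{P}(\sir>T\mid\mathcal{E})=\mathcal{L}_{I_{\mathcal{E}}}(s)$ with $s=Td^{\alpha}/P_{\rm su}$, and independence of the two interference sources makes this Laplace transform factorize.

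For the SUE interference I would apply the probability generating functional of the PPP exactly as in Lemma~\ref{lem:DL_MUE}, obtaining $\exp(-\lambda(sP_{\rm su})^{2/\alpha}C_{\alpha})=\exp(-\lambda T^{2/\alpha}d^2C_{\alpha})$. This is the only surviving term on $\mathcal{A}$ and therefore equals $\mathbb{P}(\sir>T\mid\mathcal{A})$. For the MBS contribution on $\bar{\mathcal{A}}$, the vector $\mathbf{h}$ is i.i.d.\ complex Gaussian and independent of the beamforming directions, so $\|\mathbf{W}^H\mathbf{h}\|^2$ is the squared norm of the projection of $\mathbf{h}$ onto the $K$-dimensional beamforming subspace and is Gamma$(K,1)$ distributed. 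The Gamma moment generating function then gives $\bigl(1+sP_{\rm m}D^{-\alpha}/K\bigr)^{-K}$, which, after substituting $s=Td^{\alpha}/P_{\rm su}$, matches the stated factor exactly.

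The main obstacle is justifying the Gamma$(K,1)$ law for $\|\mathbf{W}^H\mathbf{h}\|^2$, since the LZFB columns are normalized columns of the Moore--Penrose pseudoinverse of $\mathbf{H}$ and are not strictly orthonormal. The standard remedy is to note that the column space of $\mathbf{W}$ depends only on $\{\mathbf{h}_m\}_{m=1}^K$ and is independent of $\mathbf{h}$, so replacing $\mathbf{W}$ by an orthonormal basis of that subspace preserves the aggregate interference distribution under the equal per-stream power convention, reducing $\|\mathbf{W}^H\mathbf{h}\|^2$ to one half of a $\chi^2_{2K}$ variable. A minor ancillary step is the quoted $\mathbb{P}(\mathcal{A})=M/(\lambda\pi R_m^2)$, which follows from uniformly selecting $M$ out of the expected $\lambda\pi R_m^2$ SCs lying in the disk of radius $R_m$.
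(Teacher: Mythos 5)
Your proposal is correct and follows the route the paper intends: the paper states this lemma without an explicit proof, and your derivation --- total probability over $\mathcal{A}$, exponential desired-link fading turning each conditional coverage probability into a Laplace transform at $s=Td^{\alpha}/P_{\rm su}$, the PGFL of the full-density uplink SUE field giving $e^{-\lambda T^{2/\alpha}d^{2}C_{\alpha}}$, and a Gamma$(K,1)$ law for the ZF leakage $\|\mathbf{W}^H\mathbf{h}\|^2$ giving the factor $\left(1+P_{\rm m}Td^{\alpha}D^{-\alpha}/(KP_{\rm su})\right)^{-K}$ --- reproduces both stated expressions. The one imprecision is your claim that replacing $\mathbf{W}$ by an orthonormal basis of its column space \emph{preserves} the distribution of $\|\mathbf{W}^H\mathbf{h}\|^2$: the normalized ZF columns are not orthogonal, so conditionally on $\mathbf{W}$ this quantity is a weighted sum of exponentials with weights the eigenvalues of $\mathbf{W}^H\mathbf{W}$ (which sum to $K$ but need not all equal one), and Gamma$(K,1)$ is exact only in the orthonormal case, i.e.\ asymptotically as $N\to\infty$ with $K$ fixed; since the lemma's formula itself embeds this same approximation, this is a shared caveat rather than a defect specific to your argument.
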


The small cell ASE is given by $\mathcal{T}_{\rm DL, SBS} = \lambda p_c^{\rm DL,SBS} \log_2(1+T)$, which is a decreasing in $K$ and increasing in $M$.
In Table~\ref{table1} we compare the macro ASE and the small cell ASE for $\alpha = 4$, $P_{\rm m} = 1$ W, $R_m = 250$m, $\lambda = 10^{-4}$, and $T = 5$ dB. Our results show an interplay between the number of MUEs served ($K$), the number of antennas ($N$), and the number of SCs towards which interference is suppressed ($M$). Doubling the MBS antennas ($N=100 \to 200$) increases the macro throughput by 45\%, while serving more users ($K = 10 \to 20$) results in 33\% ASE increase. More interestingly, increasing $\beta$, i.e. the degrees of freedom used to cancel the interference towards the SCs, from 0.2 to 0.5 reduces the macro ASE by 20\% but can boost the SC ASE by 140\%. Nevertheless, aggressive interference cancelation ($\beta = 1$) significantly reduces the macro ASE (up to 90\%) without providing further SC ASE benefit.

\begin{table}[!t]
\caption {Macro vs. Small Cell ASE Comparison} \label{table1}
\centering
\begin{tabular}{|c|c|c|}
\hline
(K,N,$\beta$) & \pbox{20cm}{Macro ASE \\ (bps/Hz/m$^2$)} & \pbox{20cm}{SC ASE \\ (bps/Hz/m$^2$)}\\
\hline
(10,100, 0.2) & 3.59 & $1.93 \times 10^{-4}$ \\
\hline
(10,100, 0.5) & 2.84 & $4.59 \times 10^{-4}$ \\
\hline
(10,100,1) & 0.37 & $9.2 \times 10^{-4}$ \\
\hline
(10,200, 0.2) & 5.19 & $3.87 \times  10^{-4}$ \\
\hline
(10,200,0.5) & 4.11 & $9.68 \times 10^{-4}$ \\
\hline
(10,200,1) & 0.37 & $1.9 \times 10^{-3}$ \\
\hline
(20,100, 0.2) & 4.79 & $1.63 \times 10^{-4}$ \\
\hline
(20, 100, 0.5) & 3.8 & $4.1 \times 10^{-4}$ \\
\hline
(20,100,1) & 0.53 & $8.15 \times 10^{-4}$ \\
\hline
\end{tabular}
\end{table}

\section{CONCLUSION}
\label{sec:conclusion}
We explored the potential performance gains in a TDD-based HetNet where a massive MIMO BS is overlaid with small cells. We provided analytical expressions to evaluate the coverage probability and area spectral efficiency in both uplink and downlink and we optimized the uplink/downlink mode of small cells. Furthermore, we showed how the macro BS can leverage the reciprocity-based interference covariance matrix knowledge to significantly improve the SC throughput in a reverse-TDD mode.
Future work comprises the performance evaluation with multiple macrocells and different precoders/receivers considering several practical issues, such as spatial correlation between antennas, power control, and pilot contamination. Furthermore, investigating the potential benefits by using massive MIMO systems as a wireless backhaul solution for small cells is a promising research direction.

\section*{Acknowledgement}
{{This research has been partly supported by the ERC Starting Grant 305123 MORE (Advanced Mathematical Tools for Complex Network Engineering). The authors appreciate fruitful discussions with Dr. Jakob Hoydis.}}

\end{document}